%

\documentclass[10pt]{article}

\usepackage{amsmath}
\usepackage{amssymb}
\usepackage{amsthm}
\usepackage{graphicx}
\usepackage{algorithmic}

\usepackage{cite}

\usepackage{color} 


\topmargin 0.0cm
\oddsidemargin 0.5cm
\evensidemargin 0.5cm
\textwidth 16cm 
\textheight 21cm

\usepackage[labelfont=bf,labelsep=period,justification=raggedright]{caption}

\bibliographystyle{plos2009}

\makeatletter
\renewcommand{\@biblabel}[1]{\quad#1.}
\makeatother

\date{}

\pagestyle{myheadings}



\begin{document}

\begin{flushleft}
{\Large
\textbf{Solving the Rural Postman Problem using the Adleman-Lipton Model}
}
\\
Nicolaos Matsakis$^{1}$
\\
\bf{1} Department of Information and Computer Science, UC Irvine, CA, USA
\\
$\ast$ E-mail: nmatsaki@uci.edu
\end{flushleft}

\section*{Abstract}

In this survey we investigate the application of the Adleman-Lipton model on Rural Postman problem, which given an undirected graph $G=(V,E)$ with positive integer lengths on each of its edges and a subset $E^{'}\subseteq E$, asks whether there exists a hamiltonian circuit that includes each edge of $E^{'}$ and has total cost (sum of edge lengths) less or equal to a given integer B (we are allowed to use any edges of the set $E-E^{'}$, but we must use all edges of the set $E'$). The Rural Postman problem (RPP) is a very interesting NP-complete problem used, especially, in network optimization. RPP is actually a special case of the Route Inspection problem, where we need to traverse all edges of an undirected graph at a minimum total cost. As all NP-complete problems, it currently admits no efficient solution and if actually $P\neq NP$ as it is widely accepted to be, it cannot admit a polynomial time algorithm to solve it. The application of the Adleman-Lipton model on this problem, provides an efficient way to solve RPP, as it is the fact for many other hard problems on which the Adleman-Lipton model has been applied. In this survey, we provide a polynomial algorithm based on the Lipton-Adleman model, which solves the RPP in $\mathcal{O}(n^{2})$ time, where n refers to the input of the problem. 

\section*{Introduction}

The breakthrough point a potential biological computer introduces, is parallelism. To make this more clear, we state that a few milliliters of water contain around $10^{22}$ molecules. Since biological computers work at the molecular level, we can imagine how enormous the parallelism of a biological computer would be, by only refering to this example. However, a potential biological computer would surely lack single process step time, compared to an ordinary computer, since it would be able to perform only a small fraction of a single operation in 1 second \cite{Lipton95usingdna}. But the difference in parallelism is so great, that outerperforms easily the single process step time. A serious disadvantage of this technique is that errors are involved into biological experiments, giving smaller yields. However, this can be surpassed by applying again the experiment many times and thus verifying the results, with a probability which tends to 1.

First, Adleman proposed a way to apply biological experiments in order to efficiently solve the Hamiltonian Path problem\cite{Adleman94molecularcomputation} and, afterwards, Lipton applied Adleman's technique on the SAT problem \cite{Lipton95usingdna}. It must be understood that the Adleman-Lipton model, does not provide an efficient way to solve any instance of NP-complete problems, rather than relatively medium-sized instances, on which however a contemporary computer would be impractical to use, since it would need unacceptably large exponential time to solve them. There have appeared various applications of this model on NP-complete problems since 1994. Apart from the Hamiltonian Path and the SAT Problem, there have been molecular computing solutions for the 3-Colouring, the Independent Set\cite{DBLP:journals/ppl/ChangGW05}, the Knapsack, the Subgraph Isomorphism, the Maximum Clique, the Shortest Common Superstring, the Set Splitting\cite{Chang:2003:SSP:1761566.1761593}, the Bounded Post Correspondence, the Traveling Salesman\cite{Shin99solvingtraveling}, the Monochromatic Triangle problems etc. A very good study on solving various NP-complete problems using biological experiments can be found in \cite{Katsany}.  

\section*{The DNA model of Computation}
Following \cite{xiao-li} we state the model on which our algorithm will be based. First of all, DNA is a polymer which comprises of monomers, called nucleotides. Nucleotides are detected only by their bases, which are Thymine (T), Guanine (G), Cytosine (C) and Adenine (A). Two strands of DNA can form a double strand (dsDNA) if each respective basis is the Watson-Crick complement of one other, which means that A should match T and C should match G. The length of a single-stranded DNA is the total number of its nucleotides and if this number is y, then the strand is called a y-mer. If we refer to a double-stranded DNA, then its length is measured by the base pairs (bp) it has.

A test tube is a set of molecules of DNA, in other words a multiset of finite strings over the alphabet defined by the 4 characters of the corresponding nucleotides. The following operations can be performed using test tubes \cite{dna,xiao-li}:
\\

0. Input(T): Creates the initial multiset, e.g. all the possible combinations of connected strands that test tube T comprises.

1. Merge$(T_{1},T_{2})$: Given 2 test tubes $T_{1}$ and $T_{2}$, we store the union of them in the first tube and leave the second tube completely empty of molecules.

2. Copy$(T_{1},T_{2})$: Given a test tube $T_{1}$, we copy its molecules and produce an identical tube $T_{2}$. 

3. Detect($T_{1}$): Given a test tube, we output 'YES' if the test tube contains at least one strand, whatever strand this would be. It is achieved by gel electrophoresis.

4. Separation($T_{1},X,T_{2}$): Given a test tube $T_{1}$ and a set of strings X, this action removes all single strands containing as a substring a string in X and produces a test tube $T_{2}$ with the strands we removed. 

5. Selection($T_{1},L,T_{2}$): Given a test tube $T_{1}$ and an integer L, this action removes all strands with length equal to L from $T_{1}$ and gives a test tube $T_{2}$ with the strands we removed. It is achieved by gel electrophoresis.

6. Annealing(T): Given a test tube T it produces all feasible double strands according to the predescribed Watson-Crick rule. 

7. Denaturation(T): Given a test tube T it dissociates each double strand in it, into 2 single strands. This action is the opposite of Annealing and can be achieved by providing the solution with a high temperature. It is, also, called Melting.

8. Discard(T): It discards tube T.

9. Append(T,Z): Given a test tube T and a short sDNA strand Z, it appends Z onto the end of every strand in tube T.
\\

The important is that all of the above operations can be implemented with a constant number of steps in a biological experiment, in other words each of the actions takes $\mathcal{O}(1)$ time, something that does not hold for an ordinary computer.

\section*{Using the Adleman-Lipton model to solve the Rural Postman problem}
\subsection*{The Rural Postman problem}

The Rural Postman Problem is defined in the following way\cite{garey}:

Our input is an undirected graph G=(V,E), lengths $l(e)\in Z^{+}_{0}$ $\forall e\in E$, $E'\subseteq E$ and a positive integer B.

The question is whether there exists a hamiltonian cycle in G that includes each edge in E' and which has total length no more than B. 

We can easily realize that for $E^{'}=\emptyset$ we get the Traveling Salesman Problem. For example, in figure 1 we can see a hamiltonian circuit over a graph G (dotted edges do not belong to the circuit) where all edges of $E^{'}$ belong to the circuit (we assume that $E^{'}$ here comprises only of the 2 edges that connect the vertex on the far left to the rest of the circuit) and where the rest of the edges (both taken in the circuit or are the dotted ones) belong to $E-E^{'}$.

\begin{figure}[!ht]
\begin{center}
\includegraphics[scale=0.30,trim = 20mm 100mm 35mm 5mm, clip]{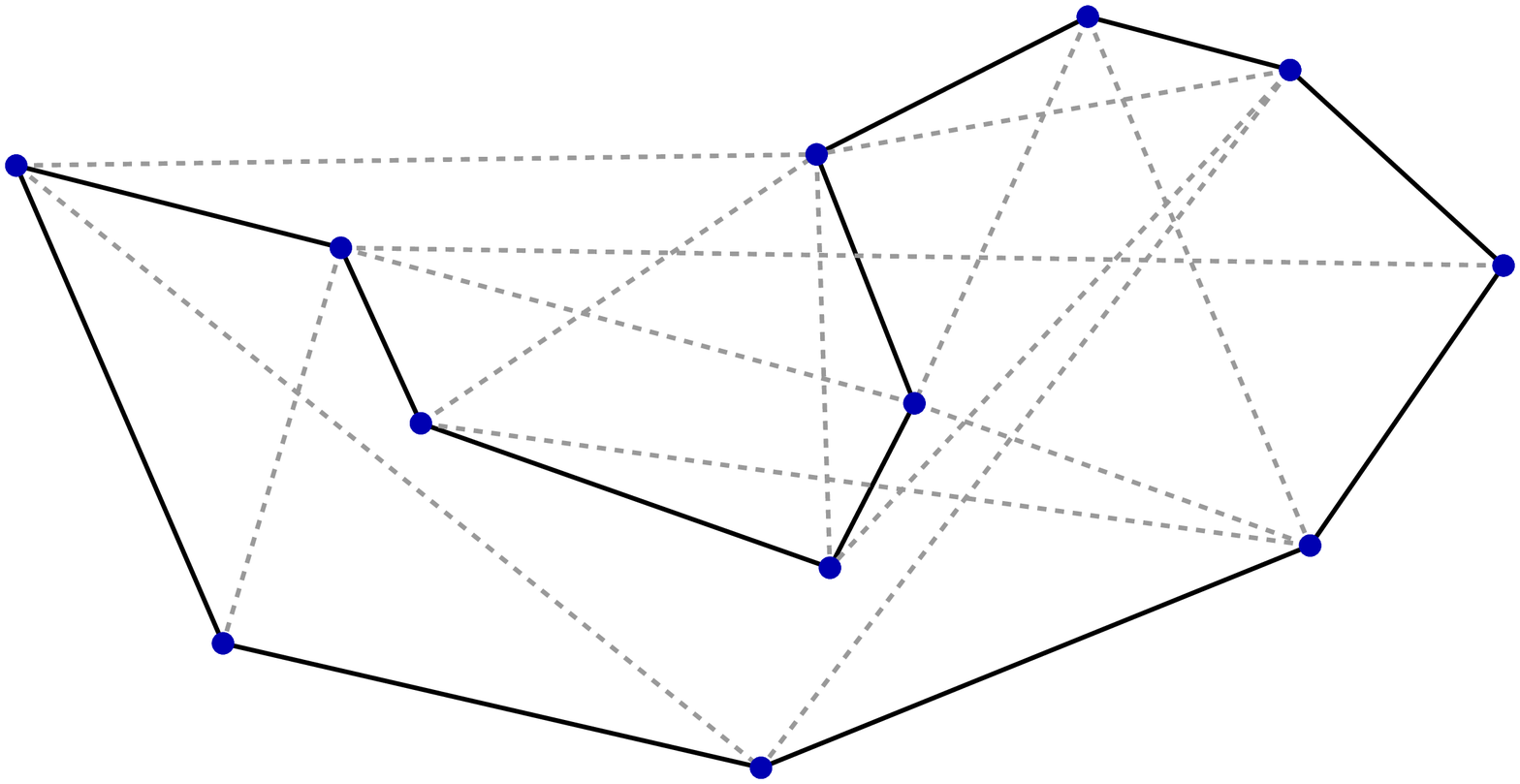}
\end{center}
\caption{
{\bf A Rural Postman circuit in a graph.} $E^{'}$ comprises of the 2 edges that connect the vertex on the far left to the rest of the circuit.} 
\label{Figure 2}
\end{figure}




\subsection*{Application of the Adleman-Lipton model on the RPP}

The algorithm is based on 4 steps: The first one is to produce all possible cycles of length equal to the number of vertices $v$ of the graph. The second one, is to obtain all cycles that use all edges from set $E^{'}$ as RPP constrains us to do. The third step checks whether the remaining cycles of the graph are actual circuits visiting each node once or not. If the answer to the previous step is positive then we must proceed to the last step which is the actual measurement of the total cost (total length) of the circuit.

The first thing we must do is to produce all possible instances of the problem. We assign a number from 1 to $v=|V|$ regarding each vertex of the graph, so we start from a random vertex assigning number 1 to it, we continue to one of its neighbours assigning number 2 and by traversing (using the DFS algorithm) the rest of the graph we assign all numbers until $v$. The next step is to assign 2 ordered pairs of values to each one of the edges. So, each edge will be assigned 2 pairs of integer values (ranged from 1 to $v$) where each one of the two integers corresponds to one vertex, providing the pair of vertices that this specific edge connects. As an example, the edge that connects the vertices 5 and 9, is assigned the ordered pair (5,9) and since we refer to undirected graphs we must assign to this edge the ordered pair (9,5) also, for the correctness of further analysis. 

Now we must encode the vertices and edges of the graph as DNA strands. So, for each one of the $v$ vertices we assign a different 20-mer single DNA strand to it. We can view this as two 10-mers connected together. Obviously, since there can be $4^{20}$ different 20-mer strands, there is absolutely no restriction doing that for a practical instance of the problem. The only thing we should notice is not to have a common first or second 10-mer of any of these strands with another first or second 10-mer of another strand. So, we require that each vertex is assigned to a 20-mer strand, where the first and second 10-mers of each strand encoding are identical to each other and, furthermore, different to each one of the 10-mers which encode the rest of the vertices (forming as pairs, the corresponding 20-mers). This decreases the amount of vertices we can encode to $4^{10}=2^{20}\approx 10^{6}$, which is still greater than the number of vertices we would need to encode, for any practical instance. An example of a vertex encoding is the following: 'ACTGAATGTAACTGAATGTA'. We can see that the first 10-mer is identical to the second 10-mer. We call this set of sDNA strands P. 

After assigning the single DNA strands (sDNA) to the vertices, we assign sDNA strands to the edges (which are at most $O(n^{2})$ for any simple graph) according to the following rule: For each vertex, we think of its 20-mer strands as strings (of 20 chars) where $\Sigma=\{A,C,G,T\}$, in other words as two 10-mers connected together, as we indicated before. We take all of the $v$ different 10-mers we have with this way (please note again that a vertex encoding comprises of two identical 10-mers) and create new 20-mer strands, by connecting each one of the 10-mer strands with each one of the rest of them that correspond to neighbour vertices, taking new 20-mers. The difference, now, is that the first and the second 10-mers of each strand are not identical as in the case of vertices. Also, each edge is encoded with 2 ways as we indicated before, e.g. with the 2 combinations of the two 10-mer encodings of the vertices that it connects. However, since we, clearly, head on creating double strands which will form paths inside the graph, we should obtain the Watson-Crick complements of those specific 20-mers in order to bind them with the sDNA strands of vertices. So, we take the complements of each one of these strands that we just created for the case of the edges. For example, an edge between vertices $'$ACTGAATGTAACTGAATGTA$'$ and $'$AGATTCACTGAGATTCACTG$'$ is encoded as the strand $'\overline{ACTGAATGTAAGATTCACTG}'$ and, also, as the strand $'\overline{AGATTCACTGACTGAATGTA}'$. We call this set of sDNA strands Q. However we do not place all of the edge formulations in Q; we pick one edge, randomly, from $E'$ (we call it $e'$) and leave it aside, excluding it from Q. 

Finally, we take the two 10-mers that encode $e'$ (as done before for the rest of edges), but we do not connect them to form two 20-mers. We leave them as two 10-mers and call this set R.

Now, following the procedure of \cite{xiao-li} we get the following, assuming that we have already created the initial multisets of P and Q where the strands of P have length $20v$ and the strands of Q length $20*(|v|-1)$, since they correspond to hamiltonian circuits and the set of edges lacks edge $e'$.
\\
\begin{algorithmic}
\STATE Merge (P,Q);
\STATE Merge (P,R);
\STATE Annealing (P);
\STATE Denaturation (P);
\STATE Selection(P,20$v$,R);
\end{algorithmic}
\begin{verbatim}\end{verbatim}

Please note that the Merge operation leaves the second tube always empty. Now, tube R contains, among others, all possible cycles of length $v$ in G. In other words, it contains a vast amount of possible paths in the undirected graph. A cycle of length $v$ does not have to be a hamiltonian circuit since it may revisit nodes and therefore avoid visiting some of the remaining vertices of the graph, even though it may give the desirable length of $v$ in the graph (20$v$ for the strand encodings). The other thing we are sure about is that edge $e'$ is a part of all those cycles. So, we have to start pruning the result and end up to the question that the problem sets to us.

\begin{figure}[!ht]
\begin{center}
\includegraphics[scale=0.30,trim = 20mm 180mm 35mm 5mm, clip]{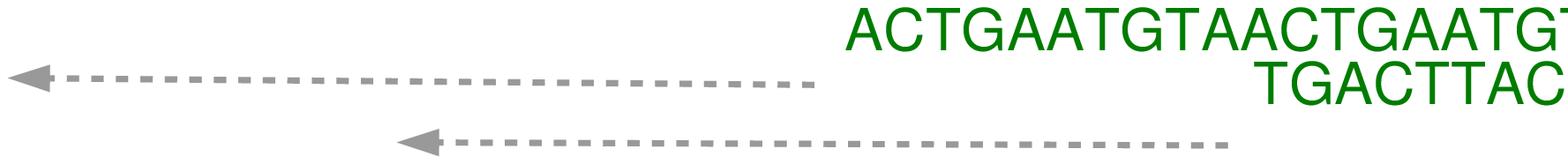}
\end{center}
\caption{
{\bf Regarding our previous example, this is the corresponding (partial) double strand associated with the $vertex_{1}-edge-vertex_{2}$ part of the path.} 
The 2 vertices (upper strand) bind with the connecting edge (lower strand) to a double strand. Each upper nucleotide is the complement of the one exactly below it. Of course the double strand expands to the left and right with the only difference that the lower sDNA strand ends 10 characters before the upper one in both directions and there is where the edge $e'$ is used, to fill in these 2 gaps}
\label{Figure 2}
\end{figure}


The first action is to take all remaining edges in E' (we indicate their strand encodings as $E^{'}_{i}$ where i is an integer from 1 to $|E^{'}|-1$) and subtract the strands that contain only all of those edges from the tube. If the cardinality of $E^{'}$ is m, then we must check for the remaining m-1 edges in $E^{'}$, after having enumerated them. So we create empty tubes $L_{1}$ to $L_{m-1}$ and apply the following procedure:
\\
\begin{algorithmic}
\STATE Copy(R,$L_{1}$);

\FOR{$i = 1$ \TO m-1} 
\STATE Separation ($L_{i},E^{'}_{i},L_{i+1}$);
\ENDFOR
\end{algorithmic}
\begin{verbatim} \end{verbatim}

By doing that for every strand encoding $E^{'}_{i}$ consequently (except for $e^{'}$ which we are sure that it is included), we take all the sDNA strands containing all edges from $E^{'}$ and a possible number of edges from $E-E'$. The resulting tube is $L_{m}$.  Also, we indicate that we removed all the sDNA strands that resulted from denaturation and which corresponded to the sDNA strands encoding vertices. The reason is that those corresponded to complement strands, according to the Watson-Crick rule, so they were separated.

Now, tube $L_{m}$ has the encoded strands of all cycles using \textit{for sure} all edges from E' and possibly edges from the set $E-E'$. However, we have not distinguished yet cycles (in general) from circuits, which is what we want to have in the end.


\begin{figure}[!ht]
\begin{center}
\includegraphics[scale=0.30,trim = 20mm 80mm 35mm 5mm, clip]{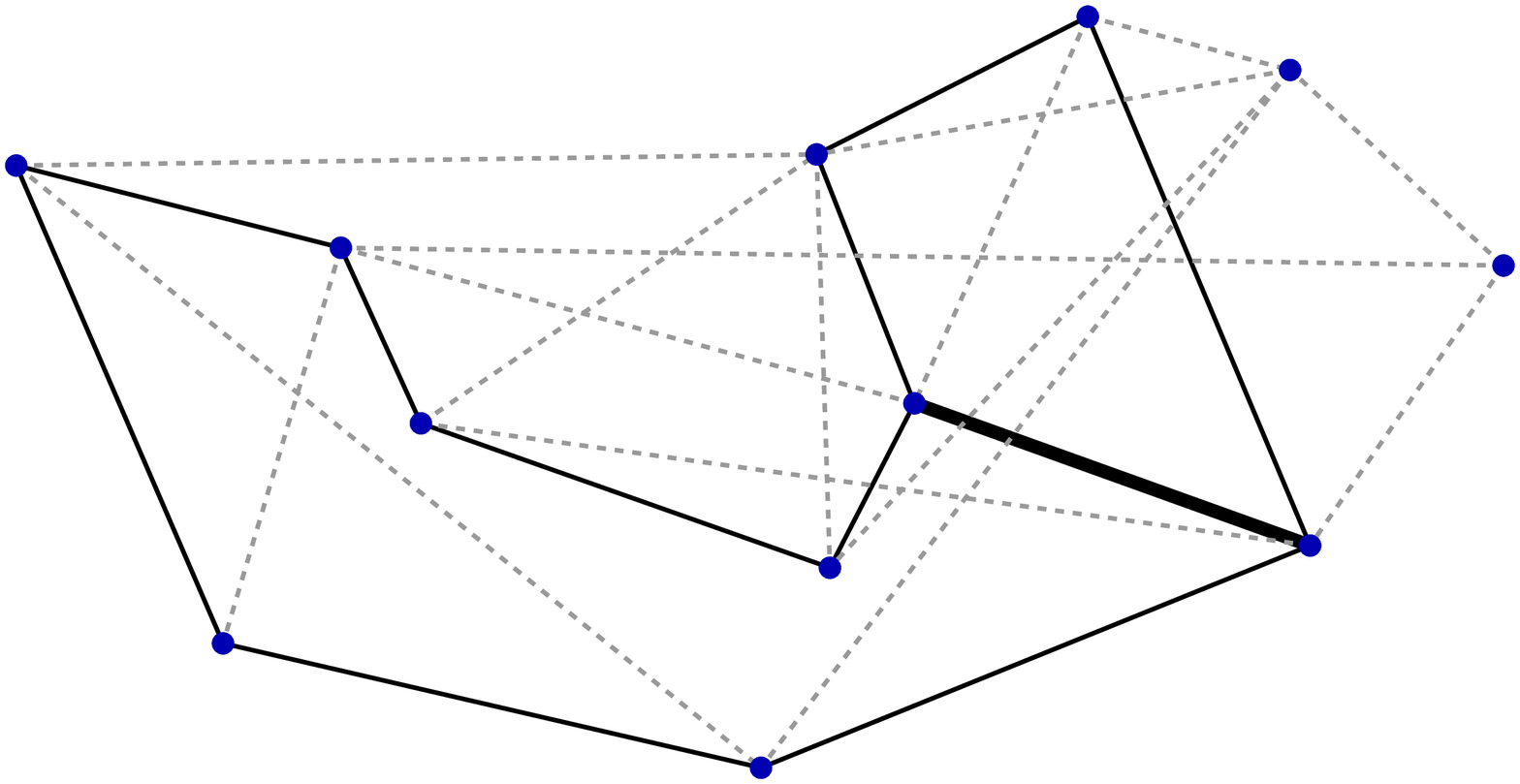}
\end{center}
\caption{
{\bf A cycle in a graph which is not a hamiltonian circuit, though it has a hamiltonian circuit's length}. The edge in bold is passed through twice back and forth consequently, leaving aside the 2 upper right vertices which should have been visited instead.}
\label{Figure 2}
\end{figure}

So, we must check whether all vertex corresponding strands are still there. We indicate as $\overline{V_{i}}$ the Watson-Crick complement of the i-th vertex encoding. In order to leave aside those strands from $L_{m}$ which do not encode hamiltonian circuits, we must check for each vertex of G whether its encoding is part of the strand. If, for example, two vertices are missing, then we have 2 repetitions of already encoded vertices, since we are sure for the total length which is 20*$v$. So, for every vertex in G, we apply the following procedure, where N and Temp are new empty tubes:
\\

\begin{algorithmic}

\FOR{$i = 1$ \TO $v$} 
\STATE Separation($L_{m},{\overline{V_{i}}}$,Temp);
 
\IF{Detect(Temp)} 
\STATE {Merge(N,Temp); $i\gets i+1$} 
\ELSE 
\STATE {Reply('NO'); Exit}
\ENDIF 
\ENDFOR

\STATE Reply('YES')
\STATE Exit

\end{algorithmic}
\begin{verbatim}\end{verbatim}
This procedure takes, clearly, $\mathcal{O}(n)$ time.

If the reply is 'NO', then there is no need to search for the total length, since the graph does not contain a hamiltonian circuit using among others, each one of the obligatory edges. If the answer is 'YES', then we should move to the following step which is the measure of the total length. 

We are asked whether there is a circuit with total cost (sum of edge lengths) less or equal to the integer B. We can restate this question, by asking whether, in the formed circuit, the edges from $E-E^{'}$ have total cost less or equal to $c=B-\sum_{i=1}^{i=|E'|}c_{i}$ where $c_{i}$ refers to the length of the i-th edge in set $E'$ (assuming the previous enumeration of them, on the end of which we place edge $e^{'}$), since we know that all edges from $E'$ have been used. So our concentration should focus on the rest of the edges used in the circuit which we know how many should be $(v-|E^{'}|)$ but we have no idea which edges are. This will be the final step of our algorithm.

We append to each strand in tube N a (not used before) 20-mer sDNA strand encoding character '@':
\\

Append(N,@)
\\

In order to encode a number using DNA sequences we will use the number of nucleotides of a random strand. So, a y-mer encodes integer y. We construct a procedure, that will be repeatedly removing edges from $E-E^{'}$ adding their length y as a y-mer after '@'. When we finish, we can count the number of total nucleotides after '@' and if this number is less or equal to c, the answer is 'YES', otherwise it is 'NO'. However, since the Append procedure works only for relatively small strand lengths (less than 20 nucleotides), we should apply this procedure repeatedly (l div 20) times for greater lengths l, appending each time a 20-mer on the end to the strand and finally append a (l mod 20)-mer to the end of the current strand. Clearly, the total nucleotide augmentation adds up to l. This way, we can formalize the append of any edge length.

First, we enumerate from i=1 to $k=|E|-|E^{'}|$ all free edges and take their strand encodings as before, so we get each $E_{i}$. L is denoted as the current length of the strands in N (so $L=20*|v|+20$ because of the append of '@') and $F$, $Z_{i}$ ($i\in\{1,..,k\}$) are tubes that we will use. The (\textit{length}($E_{i}$)-mer) is the random polymer of this specific length that we associate with the length of the corresponding edge. So, for example, an edge with length 17 is associated here with a random 17-mer. Then, we get the following, where we assume that each edge length is no more than 20:
\\

\begin{algorithmic}
\FOR {$i=1$ \TO $k$}
 \STATE {Separation(N,$E_{i}$,$Z_{i}$)};
  \IF {Detect($Z_{i}$)}
       \STATE {Append($Z_{i}$,(length($E_{i}$)-mer)); Merge(N, $Z_{i}$)}
  \ENDIF
\ENDFOR

\FOR {$i=0$ \TO c} 
\STATE{Selection($N$,L+c-i, $F$)};
\IF {Detect(F)}
\STATE{Reply ('YES'); Exit}
\ENDIF
\ENDFOR

\STATE{Reply('NO')}; 

\STATE{Exit}
\end{algorithmic}
\begin{verbatim} \end{verbatim}

\newtheorem{theorem}{Theorem}
\begin{theorem}
The above algorithm solves the decision problem of the Rural Postman in $\mathcal{O}(n^{2})$ time.
\end{theorem}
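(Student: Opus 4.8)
The plan is to split the claim into a correctness part and a running-time part, and to prove correctness by tracking, after each of the four phases of the algorithm, exactly which combinatorial objects of $G$ are represented by the strands remaining in the active tube.

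First I would verify Phase~1. The strands of $P$ encode, by concatenation of $20$-mers, closed walks on $v$ vertices; the strands of $Q$ are Watson--Crick complements of the $20$-mers obtained by joining $10$-mers of adjacent vertices; and $R$ holds the two $10$-mers of the distinguished edge $e'$. I would argue that after \texttt{Merge}, \texttt{Annealing} and \texttt{Denaturation} the double strands that form are precisely those in which a vertex-path strand is spliced edge-by-edge by edge-complement strands, with the two $10$-mers of $e'$ filling the two $10$-character gaps at the cyclic seam (the situation of Figure~2). Hence the single strands recovered of length $20v$ are exactly the encodings of closed walks of length $v$ in $G$ that traverse $e'$, and \texttt{Selection}$(P,20v,R)$ isolates them in $R$. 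The point to stress is that such a walk need not be a Hamiltonian circuit, since it may repeat vertices and omit others — which is precisely what Phases~3 and~4 exist to catch — but it is guaranteed to use $e'$.

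Next, for Phase~2 I would observe that the loop \texttt{Separation}$(L_i,E'_i,L_{i+1})$ for $i=1,\dots,m-1$ retains in $L_m$ exactly the strands containing every remaining edge of $E'$ as a substring; combined with $e'$ being present in all strands of $R$ by construction, $L_m$ holds the encodings of all length-$v$ closed walks of $G$ that use every edge of $E'$. For Phase~3, a closed walk of length exactly $v$ whose strand contains the block $\overline{V_i}$ for every vertex must visit each vertex exactly once, since any repetition in a strand of fixed length $20v$ forces an omission; hence the per-vertex \texttt{Separation}/\texttt{Detect} loop replies \texttt{YES} iff $L_m$ contains a Hamiltonian circuit through all of $E'$, and \texttt{NO} otherwise. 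Finally, in Phase~4, appending \texttt{@} and then, for each free edge $E_i$ a strand contains, a $\mathrm{length}(E_i)$-mer (iterating the $20$-mer trick for long edges), makes the number of nucleotides past \texttt{@} equal to the total length of the free edges used; since all of $E'$ is already present, a strand with post-\texttt{@} length $\le c=B-\sum_{e\in E'} l(e)$ corresponds exactly to a feasible RPP circuit, and the \texttt{Selection}$(N,L+c-i,F)$ loop over $i=0,\dots,c$ detects whether one exists, yielding the correct answer.

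For the running time I would invoke the stated fact that each tube operation costs $\mathcal{O}(1)$ and count: Phase~1 uses a constant number of operations; Phase~2 uses $|E'|-1=\mathcal{O}(n)$; Phase~3 uses $\mathcal{O}(v)=\mathcal{O}(n)$; and Phase~4 uses $\mathcal{O}(|E\setminus E'|)$ operations for the appends plus $\mathcal{O}(c)=\mathcal{O}(B)$ for the selection loop, each \texttt{Append} expanded into $\mathcal{O}(l/20)$ constant-time steps. Summing and using $|E|=\mathcal{O}(n^2)$ for a simple graph on $n$ vertices gives $\mathcal{O}(n^2)$. The main obstacle I expect is not the arithmetic but Phase~1: one must argue carefully that annealing yields \emph{all} and \emph{only} the desired length-$v$ closed walks through $e'$, so that the gap-filling at the cyclic seam neither introduces spurious strands nor drops valid ones; and one should make explicit that the claimed polynomial bound tacitly assumes $B$ (hence $c$) and the individual edge lengths are polynomial in the input size $n$, since otherwise the $\mathcal{O}(c)$ selection loop and the \texttt{Append} repetition are not polynomial.
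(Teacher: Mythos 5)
Your proposal is correct in substance, but it is a much longer route than the paper takes: the paper's entire proof of this theorem is the operation count. It simply observes that each tube operation costs $\mathcal{O}(1)$, that the dominant loop is the one over the free edges, which runs $k=|E|-|E'|=\mathcal{O}(n^{2})$ times for a simple graph, and that the final selection loop is harmless ``assuming $B$ is fixed in the input''; correctness of the four phases is taken as established by the preceding exposition and is not re-argued inside the proof. Your version folds that exposition into the proof itself, tracking after each phase which walks of $G$ the surviving strands encode, which is more self-contained and also surfaces the two hypotheses the paper leaves implicit: that annealing with the two $10$-mers of $e'$ at the seam produces exactly the length-$v$ closed walks through $e'$, and that the bound degrades unless $c$ and the edge lengths are suitably bounded. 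On that last point be slightly more careful than ``polynomial in $n$'': for the stated $\mathcal{O}(n^{2})$ bound you need $c=\mathcal{O}(n^{2})$ (the paper's ``$B$ fixed'' is the stronger assumption), since the selection loop runs $c+1$ times and the repeated \textit{Append} contributes $\mathcal{O}(l/20)$ steps per edge. One more caveat on your Phase~3 claim: as literally written, each \textit{Separation} is applied to $L_{m}$ after the previous iteration has already removed the strands containing $\overline{V_{i-1}}$, so the ``replies YES iff some single strand contains every $\overline{V_{i}}$'' equivalence holds only under the reading in which the per-vertex filters are chained (each separation applied to the output of the previous one); you should state that reading explicitly rather than attribute the iff to the loop as printed.
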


\begin{proof}
A graph may have up to $\mathcal{O}(n^{2})$ edges. Since we apply the first 'for' procedure k times and $k=\mathcal{O}(n^{2})$, the theorem is easily derived, assuming B is fixed in the input of the problem.
\end{proof}

Now, if an edge length exceeds 20 then we should apply the \textit{Append}($Z_{i}$,20-mer) procedure $\left\lfloor length/20 \right\rfloor$ times (where 20-mer is a random 20-mer) and then apply \textit{Append}($Z_{i}$,((\textit{length mod 20})-mer)) once, instead of what we actually do above. However, for providing a clearer result we mention this externally.

Finally, we state the whole algorithm as a whole, for any edge length:
\\

\begin{algorithmic}
\STATE Merge (P,Q);
\STATE Merge (P,R);
\STATE Annealing (P);
\STATE Denaturation (P);
\STATE Selection(P,20$v$,R);
\STATE Copy(R,$L_{1}$);
\FOR{$i = 1$ \TO m-1} 
\STATE Separation ($L_{i},E^{'}_{i},L_{i+1}$);
\ENDFOR

\FOR{$i = 1$ \TO $v$} 
\STATE Separation($L_{m},{\overline{V_{i}}}$,Temp);
 
\IF{Detect(Temp)} 
\STATE {Merge(N,Temp); $i\gets i+1$} 
\ELSE 
\STATE {Reply('NO'); Exit}
\ENDIF 
\ENDFOR
\STATE Append(N,@)
\FOR {$i=1$ \TO $k$}
 \STATE {Separation(N,$E_{i}$,$Z_{i}$)};
  \IF {Detect($Z_{i}$)}
     \IF {length($E_{i})\leq 20$}
                   \STATE {Append($Z_{i}$,(length($E_{i}$)-mer)); Merge(N, $Z_{i}$)}
     \ELSE 
                   \FOR {$j=1$ \TO $\left\lfloor (length(E_{i})/ 20\right\rfloor$}
                      \STATE {Append($Z_{i}, 20-mer)$}
                   \ENDFOR
                   \STATE {Append($Z_{i},((length(E_{i}) \mod 20)-mer))$}
             
      \ENDIF
  \ENDIF
\ENDFOR

\FOR {$i=0$ \TO c} 
\STATE{Selection($N$,L+c-i, $F$)};
\IF {Detect(F)}
\STATE{Reply ('YES'); Exit}
\ENDIF
\ENDFOR

\STATE{Reply('NO')}; 

\STATE{Exit}

\end{algorithmic}


\section*{Conclusions}

There have been many applications of the Adleman-Lipton model on various NP-complete problems. Though the effectiveness of the Adleman-Lipton model has been questioned during the last years, it is, surely, at least a method of theoretical importance which may have even further consequences in the area of Algorithms. Here, we presented a method of applying this model on the Rural Postman problem. Finally, we indicate that with a very easy modification of our algorithm (associating each edge with only one ordered pair of vertices and not with two as we did), we can apply this method to the directional variation of the Rural Postman problem which is, also, NP-complete \cite{garey}.

\section*{Acknowledgments}

\bibliography{sigproc}



\end{document}